\definecolor{shadecolor}{RGB}{255,255,128}
\definecolor{shadecolor}{gray}{0.92}
\newtheorem{lemma}{Lemma}
\newcommand{\ubar}[1]{\text{\b{$#1$}}}
\begin{document}

\author{R. J. Meijer \and  T. J. P. Krebs \and  A. Solari \and J. J. Goeman}

\title{A shortcut for Hommel's procedure in linearithmic time}





\maketitle

\begin{abstract}
  Hommel's and Hochberg's procedures for familywise error control are both derived as shortcuts in a closed testing procedure with the Simes local test. Hommel's shortcut is exact but takes quadratic time in the number of hypotheses. Hochberg's shortcut takes only linearithmic time, but is conservative. In this paper we present an exact shortcut in linearithmic time, combining the strengths of both procedures. The novel shortcut also applies to a robust variant of Hommel's procedure that does not require the assumption of the Simes inequality.
\end{abstract}


\section{Introduction}

The method of \cite{Hommel1988} is a well-known multiple testing procedure that controls the familywise error rate (FWER), guaranteeing that with probability at least $1-\alpha$ no true null hypotheses are rejected. Hommel's method can be constructed using a combination of the closed testing procedure \citep{Marcus1976} with local tests based on the inequality of \cite{Simes1986}. Hommel's procedure is uniformly more powerful than the methods of Bonferroni \citep{Dunn1961}, \cite{Holm1979} and \cite{Hochberg1988}.

Hommel's procedure is valid whenever the Simes inequality can be assumed to hold for the $p$-values corresponding to true null hypotheses. The same condition is necessary for the validity of the procedure of \cite{Benjamini1995} as a False Discovery Rate (FDR) controlling procedure. This condition and has been extensively studied elsewhere \citep[e.g.][]{Benjamini2001, Rodland2006, Sarkar2008, Finner2014}. A robust variant of the procedure, which is more conservative but does not require the assumption of the Simes inequality, was proposed by \cite{Hommel1986}.

In general, calculation time of a procedure based on closed testing is exponential in the number of hypotheses $m$. Faster algorithms,  known as \emph{shortcuts}, can be constructed for specific choices of local tests. Shortcuts are called exact if they always yield the same number of rejections as the full closed testing procedure, or conservative of they yield at most the same number of rejections, and sometimes fewer. For the case of local tests based on Simes, \cite{Hommel1988} proposed an exact shortcut in quadratic time. \cite{Hochberg1988} presented an alternative shortcut that takes linear time after the $p$-values are sorted (which takes linearithmic time). Hochberg's shortcut, however, is conservative.

In this paper we present a novel shortcut for closed testing with local tests based on Simes. The new shortcut is exact like Hommel's procedure, but takes only linearithmic time like Hochberg's. It allows computationally efficient scaling of Simes-based multiple testing procedures to large multiple testing problems without the power loss incurred by switching to Hochberg's method. The new shortcut also generalizes to the robust variant of Hommel's method \citep{Hommel1986} that does not assume the Simes inequality.

The structure of the paper is as follows. We start by introducing closed testing and the Simes local test. From a result by Hommel, we introduce the crucial function $h(\alpha)$ and propose a novel algorithm to calculate it in linearithmic time. Next, we explain how to calculate all adjusted $p$-values from $h(\alpha)$ in linear time. We finish with a short simulation that compares runtimes in practice. The algorithms described in this paper are implemented in the \texttt{R}-package \texttt{hommel}, available on CRAN.

\section{Closed testing, Simes, Hommel and Hochberg}

Suppose we have $m$ hypotheses $H_1, \ldots, H_m$ that we are interested in testing. Some of the hypotheses are true, while others are false. Denote by $T \subseteq \{1,\ldots,m\}$ the index set of true hypotheses. For each hypothesis we suppose that we have corresponding (unadjusted) $p$-values $p_1,\ldots,p_m$, one for each hypothesis $H_1,\ldots H_m$. Without loss of generality we will assume that $p_1 \leq \ldots \leq p_m$.

The aim of a classical multiple testing procedure is to find an index set $R_\alpha$ of hypotheses to reject, which is as large as possible while still controlling FWER at level $\alpha$. This requires that
\[
\mathrm{P}(T \cap R_\alpha = \emptyset) \geq 1-\alpha,
\]
i.e.\ with probability at least $1-\alpha$ there are no type I errors. It is often convenient to describe $R_\alpha$ as a function of $\alpha$ through adjusted $p$-values $\tilde p_1,\ldots, \tilde p_m$, defined as follows:
\[
\tilde p_i = \min \{0\leq \alpha \leq 1\colon i \in R_\alpha\}.
\]
Generally, $R_\alpha$ is monotone in $\alpha$ with $R_1 = \{1,\ldots,m\}$ so that all $\tilde p_i$ are defined and we have $R_\alpha = \{1 \leq i\leq m \colon \tilde p_i \leq \alpha\}$.

Like Hommel and Hochberg we consider sets $R_\alpha$ arising from the combination of closed testing with Simes tests. The closed testing procedure \citep{Marcus1976} augments the collection of hypotheses with all possible intersection hypotheses $H_I=\bigcap_{i\in I} H_i$, with $I\subseteq\{1,\ldots,m\}$. An intersection hypothesis $H_I$ is true if and only if $H_i$ is true for all $i\in I$. Note that $H_i = H_{\{i\}}$, so that all original hypotheses, known as \emph{elementary hypotheses}, are also intersection hypotheses. Next, all intersection hypotheses are tested with a valid $\alpha$-level test, the \emph{local test}. The set $R_\alpha$ is now defined as the set of all $i \in \{1,\ldots,m\}$ for which $H_I$ is rejected by the local test for all $I \ni i$. For FWER to be controlled it is sufficient that the local test rejects $H_T$, the intersection of all true hypotheses, with probability at most $\alpha$ \citep{Goeman2011}.

The Simes local test rejects an intersection hypothesis $H_I$ if and only if there is at least one $i \in \{1,\ldots,|I|\}$ for which
\begin{equation} \label{eq_Simestest}
p_{(i\mathbin{:}I)} \leq \frac{i\alpha}{|I|},
\end{equation}
where for any $I \subseteq \{1,\ldots,m\}$ and $i \in \{1,\ldots,|I|\}$, we define $p_{(i\mathbin{:}I)}$ as the $i$th smallest $p$-value among the multiset $\{p_i\colon i\in I\}$.

For the validity of the resulting closed testing procedure we must assume that the Simes test rejects $H_T$ with probability at most $\alpha$. There is extensive and growing literature on the conditions under which this assumption holds, which we will not revisit here \citep{Benjamini2001, Rodland2006, Sarkar2008, Finner2014}. The validity of the Simes test on $H_T$ is also necessary for the validity of the procedure of \cite{Benjamini1995} as an FDR-controlling procedure.

A robust variant of the Simes test due to \cite{Hommel1983} rejects an intersection hypothesis $H_I$ if and only if there is at least one $i \in \{1,\ldots,|I|\}$ for which
\begin{equation} \label{eq_Simestest2}
p_{(i\mathbin{:}I)} \leq \frac{i\alpha}{|I|\sum_{j=1}^{|I|} j^{-1}}.
\end{equation}
This robust local test is more conservative but makes no assumption on the joint distribution of the $p$-values: it is valid whenever the distribution of $p$-values from true null hypotheses is uniform or stochastically larger than that. Its use as a local test in combination with closed testing was proposed by \cite{Hommel1986}. In our discussion below we will treat the tests (\ref{eq_Simestest}) and (\ref{eq_Simestest2}) as special cases of a general local test that rejects $H_I$ if and only if there is at least one $i \in \{1,\ldots,|I|\}$ for which
\begin{equation} \label{eq_Simestest_general}
s_{|I|} p_{(i\mathbin{:}I)} \leq i\alpha.
\end{equation}
The interesting cases will be $s_k = k$ (Simes) and $s_k = k\sum_{j=1}^{k} j^{-1}$ (robust variant), but in the rest of this paper we will only use that $s_0=0$ and that $s_k$ is weakly increasing in $k$. We also assume that all $s_k$, $k=1,\ldots, m$ can be calculated together in a most linearithmic time. Trivially, we reject all hypotheses when $\alpha=1$.

Naive application of the closed testing procedure requires $2^m$ local tests to be performed, which severely limits the usefulness of the procedure in large problems. For this reason \emph{shortcuts} have been developed for specific local tests. For the Simes local test
\cite{Hommel1988} proved that $i \in R_\alpha$ if and only if
\begin{equation} \label{hommel}
h(\alpha) p_i\leq\alpha,
\end{equation}
where $h(\alpha)$ is defined in Section \ref{section_h} below. Note that Hommel's rule is analogous to Bonferroni, except that the Bonferroni factor $m$ is replaced by the random variable $h(\alpha) \leq m$. Based on this result Hommel formulated a quadratic time algorithm for calculating all adjusted $p$-values.

To obtain an alternative shortcut for the Simes case, \cite{Hochberg1988} proved that $R'_\alpha \subseteq R_\alpha$, where $i \in R'_\alpha$ if and only if there is a $j \geq i$ such that
\[
(m-j+1) p_j \leq \alpha.
\]
This result leads to an easy linear time shortcut if the $p$-values are sorted, or a linearithmic time one if the sorting is taken into account. Hochberg's method is conservative, sacrificing power for computational efficiency. It is easy to find instances for which $R'_\alpha \subset R_\alpha$. For example when $m=4$, $\alpha=0.05$, $p_1 = 0.02$, $p_2 = 0.02$, $p_3 = 0.03$, and $p_4 = 0.90$, Hommel's method rejects two hypotheses, while Hochberg's method rejects none.

\section{The function $h(\alpha)$} \label{section_h}

To construct the new shortcut we start from Hommel's result (\ref{hommel}), generalizing it to the general local test (\ref{eq_Simestest_general}). The quantity in Hommel's procedure that is most costly to calculate is the function $h(\alpha)$. In this section we will study the properties of this function, before we show how it can be calculated for all $\alpha$ in linearithmic time.

We define $h(\alpha)$ as the largest size $|I|$ of an intersection hypothesis $H_I$ that cannot be rejected by the closed testing procedure at level $\alpha$. Equivalently, $h(\alpha)$ is therefore also the largest size $|I|$ of an intersection hypothesis $H_I$ that cannot be rejected by the local test at level $\alpha$.
If any hypothesis of size $i$ is not rejected, then certainly the `worst case' hypothesis $H_{K_i}$ with $|K_i|=i$ given by
\[
K_i = \{m-i+1, \ldots, m\}
\]
is not rejected. Applying (\ref{eq_Simestest_general}) it follows that
\[
h(\alpha) =\max \big\{i\in\{0,\ldots,m\}: s_ip_{m-i+j}>j\alpha, \textrm{ for } j=1,\ldots,i\big\}.
\]

The importance of $h(\alpha)$ is clear from the following lemma.

\begin{lemma} \label{elementary}
$H_i$ is rejected by the closed testing procedure if and only if $s_{h(\alpha)} p_i \leq \alpha$.
\end{lemma}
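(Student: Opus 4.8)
The plan is to unfold the definition of closed testing and then prove the two implications separately, each via its contrapositive. By definition $i\in R_\alpha$ (i.e.\ $H_i$ is rejected by the closed testing procedure) exactly when the local test rejects $H_I$ for \emph{every} $I\ni i$, so I must show that this holds if and only if $s_{h(\alpha)}p_i\le\alpha$. Writing $h=h(\alpha)$ throughout, I would lean on the worst-case observation already recorded above: among all index sets of a given size $n$, the set $K_n$ is the hardest to reject, so if any $H_I$ with $|I|=n$ escapes rejection then so does $H_{K_n}$, and hence $n\le h$ by the definition of $h$.

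For the implication $s_h p_i\le\alpha\Rightarrow H_i$ rejected I would argue contrapositively: assume some $I\ni i$ is not rejected and put $n=|I|$. Since $i\in I$, the value $p_i$ equals some order statistic $p_{(k\mathbin{:}I)}$ with $k\ge 1$, and non-rejection gives $s_n p_i=s_n p_{(k\mathbin{:}I)}>k\alpha\ge\alpha$. The worst-case observation forces $n\le h$, so weak monotonicity of $s$ yields $s_h\ge s_n$ and therefore $s_h p_i\ge s_n p_i>\alpha$, which is exactly the contrapositive.

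The converse, $H_i$ rejected $\Rightarrow s_h p_i\le\alpha$, carries the real content, and I would again prove the contrapositive: assuming $s_h p_i>\alpha$, I must exhibit a single $I\ni i$ that the local test fails to reject. The worst-case set $K_h=\{m-h+1,\ldots,m\}$ is, by the definition of $h$, not rejected, i.e.\ $s_h p_{m-h+j}>j\alpha$ for $j=1,\ldots,h$. If $i\ge m-h+1$ then $i\in K_h$ and we are done. The delicate case is $i\le m-h$, where I would swap $i$ into $K_h$ in place of its smallest element, taking $I=\{i\}\cup\{m-h+2,\ldots,m\}$, a set of size $h$ containing $i$.

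The main obstacle is then the bookkeeping of the order statistics of this modified set, which is what makes the construction work. Because $i\le m-h$ gives $p_i\le p_{m-h+2}$, the index $i$ supplies the smallest element of $I$, so $p_{(1\mathbin{:}I)}=p_i$ and $p_{(j\mathbin{:}I)}=p_{m-h+j}$ for $j=2,\ldots,h$. The non-rejection condition then splits cleanly: for $j=1$ it reads $s_h p_i>\alpha$, which is precisely the standing assumption, and for $j\ge 2$ it reads $s_h p_{m-h+j}>j\alpha$, inherited verbatim from the non-rejection of $K_h$. Hence $H_I$ is not rejected, so $H_i$ is not rejected by closed testing, completing the contrapositive. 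I would close by noting that the boundary case $h=0$ needs no separate treatment, since then $s_h=s_0=0$ makes $s_h p_i\le\alpha$ hold trivially, in agreement with the first implication showing every $H_i$ is rejected.
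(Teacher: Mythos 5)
Your proposal is correct and takes essentially the same approach as the paper: both hinge on the same key construction $I=\{i\}\cup\{m-h(\alpha)+2,\ldots,m\}$, obtained by swapping $i$ into the worst-case set $K_{h(\alpha)}$ in place of its smallest element, together with the monotonicity of $s_k$ and the definition of $h(\alpha)$ for the easier direction. The only difference is that you argue both implications via their contrapositives while the paper argues them directly, which is a cosmetic rather than substantive distinction.
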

\begin{proof}
First, assume $s_{h(\alpha)} p_i \leq \alpha$ and take $I \ni i$. Now either $|I| > h(\alpha)$, so $H_I$ is rejected by the local test, or $|I| \leq h(\alpha)$, and
\[
s_{|I|} p_{(1\mathbin{:}I)} \leq s_{|I|} p_{i} \leq s_{h(\alpha)} p_i \leq \alpha,
\]
so $H_I$ is rejected by the local test. Consequently, $H_I$ is rejected by the local test for all $I \ni i$, so $H_i$ is rejected by the closed testing procedure.

Next, assume that $H_i$ is rejected by the closed testing procedure; we show that $s_{h(\alpha)} p_i \leq \alpha$. This is trivial if $h(\alpha)=0$, so assume $h(\alpha)>0$. Rejection of $H_i$ implies that $H_J$ is rejected for all $J \ni i$. Write $K = K_{h(\alpha)}$. Since $H_K$ is not rejected by definition of $h(\alpha)$ we must have $i \notin K$, so $i< m-h(\alpha)+1$. Consider $J = \{i, m-h(\alpha)+2, \ldots, m\}$. Since $H_K$ is not rejected, for all $2 \leq j \leq |J|$, we have $s_{|J|} p_{(j\mathbin{:}J)} = s_{|K|} p_{(j\mathbin{:}K)} > j\alpha$. Since $H_J$ was rejected by the local test we must, therefore, have $s_{h(\alpha)} p_i = s_{|J|} p_{(1\mathbin{:}J)} \leq \alpha$.
\end{proof}

The function $h(\alpha)$ is a right-continuous step function on its domain $[0,1]$, since it is weakly decreasing and takes integer values in $\{0,\ldots,m\}$. To describe $h(\alpha)$ it suffices, therefore, to find the jumps of the function. We have $h(1)=0$, and, if $s_1>0$, $h(0)$ is the number of non-zero $p$-values. Let
\[
\alpha_i = \min\{0\leq \alpha\leq 1\colon h(\alpha) < i\}.
\]
By definition of $h(\alpha)$, $\alpha_i$ for $1\leq i\leq m$ is the lowest $\alpha$-level at which the hypothesis $H_{K_j}$, the `worst case hypothesis of size $j$', can still be rejected by the local test procedure for all $j \geq i$. Therefore, for $i=1, \ldots m$, we have
\begin{equation} \label{cummax}
\alpha_i = \max_{i \leq j \leq m} \alpha^*_j,
\end{equation}
where $\alpha^*_j$ is the lowest $\alpha$-level at which the hypothesis $H_{K_j}$ can be rejected by the local test. By (\ref{eq_Simestest_general}) we have
\begin{equation}\label{alphamin}
\alpha^*_i = \min_{k\in\{1,\ldots,i\}} {\frac{s_i}{k} \cdot p_{m-i+k}} = s_i \cdot \min_{k\in\{1,\ldots,i\}} {\frac{p_{m-i+k}}{k}}.
\end{equation}
For $i>m$ we have $\alpha_i = \alpha^*_i=0$. Tied values of $\alpha_i$ indicate jumps of $h(\alpha)$ larger than 1. Since we may trivially reject all hypotheses if $\alpha=1$ we may set all values of $\alpha^*_i$ that exceed 1 to 1.

In the case of the Simes local test, things simplify slightly. It can be shown that $\alpha_i = \alpha^*_i$ for all $i$. Moreover, we can only have $\alpha_i=\alpha_j$ for $i\neq j$ if $\alpha_i=\alpha_j=0$ or if $\alpha_i=\alpha_j=\alpha_m=p_m$. This is intuitive from Figure \ref{fig_halpha}, where we illustrate the function $h(\alpha)$ and its jumps for the case of Simes local tests.

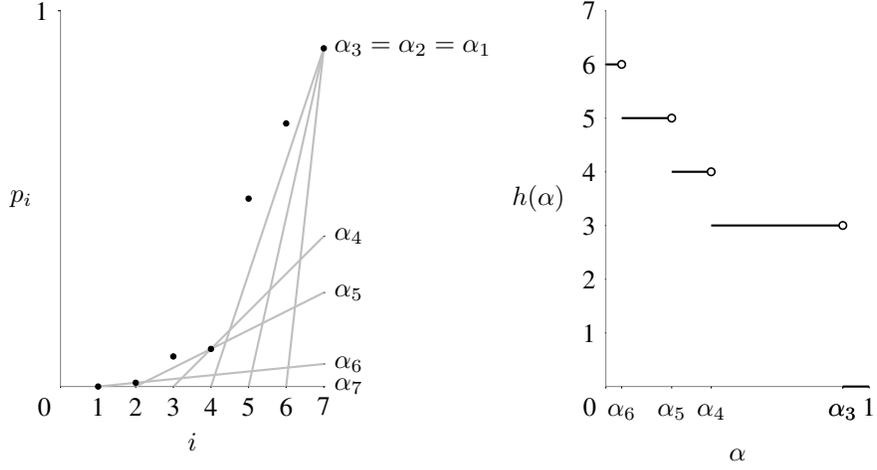
\begin{figure}[!ht]
\centering
\begin{tikzpicture}[scale=.5,
    mydot/.style={
      circle,
      fill=white,
      semithick,
      draw,
      outer sep=0pt,
      inner sep=1pt
    }]
    \draw[help lines] (0,0)--(7,0);
    \draw[help lines] (0,0)--(0,10) ;
    \draw (-2pt, 10) node[left] {1} -- (.2pt, 10);
    \draw[help lines,thick, gray!50] (1, 0) -- (7, 0.6);
    \draw[help lines,thick, gray!50] (2, 0) -- (7, 2.5);
    \draw[help lines, thick, gray!50] (3, 0) -- (7, 4);
    \draw[help lines, thick, gray!50] (4, 0) -- (7, 9);
    \draw[help lines, thick, gray!50] (5, 0) -- (7, 9);
    \draw[help lines, thick, gray!50] (6, 0) -- (7, 9);
    \draw[fill] (1,0) circle(2pt);
    \draw[fill] (2,0.1) circle(2pt);
    \draw[fill] (3,0.8) circle(2pt);
    \draw[fill] (4,1) circle(2pt);
    \draw[fill] (5,5) circle(2pt);
    \draw[fill] (6,7) circle(2pt);
    \draw[fill] (7,9) circle(2pt);
    \draw (7,0) -- (7cm+1pt,0) node[right]  {$\alpha_7$};
    \draw (7,0.6) -- (7cm+1pt,0.6) node[right]  {$\alpha_6$};
    \draw (7,2.5) -- (7cm+1pt,2.5) node[right]  {$\alpha_5$};
    \draw (7,4) -- (7cm+1pt,4) node[right]  {$\alpha_4$};
    \draw (7,9) -- (7cm+1pt,9) node[right]  {$\alpha_3=\alpha_2=\alpha_1$};
    \draw (0,-1pt) node[below left] {0} -- (0,0);
    \draw (1,-1pt) node[below] {1} -- (1,0);
    \draw (2,-1pt) node[below] {2} -- (2,0);
    \draw (3,-1pt) node[below] {3} -- (3,0);
    \draw (4,-1pt) node[below] {4} -- (4,0);
    \draw (5,-1pt) node[below] {5} -- (5,0);
    \draw (6,-1pt) node[below] {6} -- (6,0);
    \draw (7,-1pt) node[below] {7} -- (7,0);
    \path (3.5,-1) node[below]{$i$};
    \path (-.5,5) node[left]{$p_i$};
    \begin{scope}[shift={(14.5,0)}, xscale=.7, yscale=10/7]
    \draw[help lines] (0,0)--(10,0);
    \draw[help lines] (0,0)--(0,7);
    \draw (0,-1pt) node[below left] {0} -- (0,0);
    \foreach \y in {1,...,7}
      \draw (-1pt,\y) node[left] {\y} -- (0,\y);
    \draw[thick] (0,6) -- (0.6,6) node[mydot]{};
    \draw[thick] (0.6, 5) -- (2.5, 5) node[mydot]{};
    \draw[thick] (2.5,4) -- (4,4) node[mydot]{};
    \draw[thick] (4,3) -- (9,3)  node[mydot]{};
    \draw[thick] (9,0) -- (10,0);
    \draw (0.6,0) -- (0.6,-1pt) node[below]  {\phantom{1}$\alpha_6$\phantom{1}};
    \draw (2.5,0) -- (2.5,-1pt) node[below]  {\phantom{1}$\alpha_5$\phantom{1}};
    \draw (4,0) -- (4,-1pt) node[below]  {\phantom{1}$\alpha_4$\phantom{1}};
    \draw (9,0) -- (9,-1pt) node[below]  {\phantom{1}$\alpha_3$\phantom{1}};
    \draw (9,0) -- (9,-1pt) node[below]  {\phantom{1}$\alpha_3$\phantom{1}};
    \draw (10,0) -- (10,-1pt) node[below]  {1};
    \path (5,-1) node[below]{$\alpha$};
    \path (-1.2,3.5) node[left]{$h(\alpha)$};
    \end{scope}
\end{tikzpicture}
\caption{On the left-hand side an example of the construction of $\alpha_i$, ${i=1,\ldots,m}$, for the Simes local test with $m=7$ and $p_1=0$, $p_2=0.01$, $p_3=0.08$, $p_4=0.1$, $p_5=0.5$, $p_6=0.7$, $p_7=0.9$. The points are the $p$-values plotted against their rank. The value of $\alpha_i$ is the largest $\alpha$ such that the line from $(m-i, 0)$ to $(m,\alpha)$ has no points below the line. For the given $p$-values we find $\alpha_1=\alpha_2=\alpha_3=p_7=0.9$, $\alpha_4=4p_4=0.4$, $\alpha_5=5p_4/2=0.25$, $\alpha_6=6p_2=0.06$, $\alpha_7=7p_1=0$. The right-hand side is the resulting step function $h(\alpha)$.} \label{fig_halpha}
\end{figure}

\section{A linearithmic time algorithm for $h(\alpha)$}

Finding $h(\alpha)$ using (\ref{alphamin}) requires computing $m$ minima. These calculations are equivalent to finding the minimum of each column of the matrix
\begin{equation} \label{eq_M}
M = \left( \begin{array}{cccccc}
p_1 &  \vspace{.1cm} \\
\frac{p_2}{2} & p_2 & \vspace{.1cm}  \\
\frac{p_3}{3} & \frac{p_3}{2} & p_3 & \vspace{.1cm}  \\
\frac{p_4}{4} & \frac{p_4}{3} & \frac{p_4}{2} & p_4 & \vspace{.1cm}  \\
\vdots & \vdots & \vdots & \vdots & \ddots  \vspace{.1cm} \\
\frac{p_m}{m} & \frac{p_m}{m-1} & \frac{p_m}{m-2} & \frac{p_m}{m-3} & \dots & p_m
\end{array}\right),
\end{equation}
with $M_{ik} = p_i/(i-k+1)$, $i \geq k$. This calculation normally takes $\Theta(m^2)$ time. This is how Hommel's procedure is currently implemented, e.g.\ in the \texttt{R}-function \texttt{p.adjust}. However, the complexity of this calculation can be reduced to $O(m \log(m))$ by exploiting the following Lemma.

\begin{lemma}\label{lemma}
For $1\leq (k,l) \leq (i,j) \leq m$ with $(k-l)(i-j) \leq 0$, $M_{ik} \leq M_{jk}$ implies that $M_{il} \leq M_{jl}$.
\end{lemma}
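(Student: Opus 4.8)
The plan is to reduce the four matrix inequalities to the sign of a single affine function of the column index, and then read off the result from the monotonicity of that function. First I would clear denominators: since each of $M_{ik}, M_{jk}, M_{il}, M_{jl}$ appears in the statement, the index constraints place all four entries in the lower-triangular part of $M$, so the denominators $i-k+1$, $j-k+1$, $i-l+1$, $j-l+1$ are all positive integers. Cross-multiplying then turns $M_{ik}\le M_{jk}$ into the equivalent polynomial inequality $p_i(j-k+1)\le p_j(i-k+1)$, and likewise $M_{il}\le M_{jl}$ into $p_i(j-l+1)\le p_j(i-l+1)$.

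Next I would introduce the function $g(t)=p_i(j-t+1)-p_j(i-t+1)$, viewed as a function of the column index $t$. The hypothesis is then exactly $g(k)\le 0$ and the desired conclusion is exactly $g(l)\le 0$. Expanding gives $g(t)=\bigl[p_i(j+1)-p_j(i+1)\bigr]+(p_j-p_i)\,t$, so $g$ is affine in $t$ with slope $p_j-p_i$. The crucial observation is that, because the $p$-values are sorted with $p_1\le\cdots\le p_m$, the sign of this slope is governed by the order of $i$ and $j$: if $i\le j$ then $p_j-p_i\ge 0$ and $g$ is nondecreasing, whereas if $i\ge j$ then $p_j-p_i\le 0$ and $g$ is nonincreasing.

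Finally I would match this monotonicity against the assumption $(k-l)(i-j)\le 0$. In the case $i\le j$ the assumption forces $k\ge l$, so $l$ sits to the left of $k$ and the nondecreasing $g$ yields $g(l)\le g(k)\le 0$; in the case $i\ge j$ it forces $l\ge k$, so the nonincreasing $g$ again yields $g(l)\le g(k)\le 0$. Either way the conclusion follows. I do not expect a genuine obstacle here: once the reformulation is in place the argument is essentially a one-line remark about an affine function. The only care required is bookkeeping, namely correctly pairing the sign of the slope (fixed by $i$ versus $j$ through the sortedness of the $p$-values) with the direction of the step from $k$ to $l$ (fixed by the sign condition), together with checking the degenerate cases $p_i=p_j$, $i=j$, and $k=l$, in each of which $g$ is constant or the step is trivial and the claim holds immediately.
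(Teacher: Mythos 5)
Your proposal is correct and is essentially the paper's own argument in different clothing: the paper also clears denominators to get $(j-k+1)p_i \leq (i-k+1)p_j$ and then adds $(k-l)p_i \leq (k-l)p_j$ (valid since $(k-l)(i-j)\leq 0$ and the $p$-values are sorted), which is exactly your statement that the affine function $g$ changes by $(k-l)(p_i-p_j)\leq 0$ from $k$ to $l$. The only difference is presentational: the paper's single added inequality handles both of your sign cases at once, avoiding the case split on $i\leq j$ versus $i\geq j$.
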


\begin{proof}
Suppose $M_{ik} \leq M_{jk}$. Then $(j-k+1)p_i \leq (i-k+1)p_j$. Since \mbox{$(k-l)(i-j) \leq 0$} implies $(k-l)(p_i-p_j) \leq 0$ we may add $(k-l)p_i \leq (k-l)p_j$ to get $(j-l+1)p_i \leq (i-l+1)p_j$, so that $M_{il} \leq M_{jl}$.
\end{proof}

It follows from Lemma~\ref{lemma} that a sequence of minima of columns of $M$ can be found for which the row location is weakly increasing. If a minimum of column $k$ is in row $i$, a minimum of column $l<k$ will be in row $j\leq i$; a minimum in column $l>k$ will be in row $j\geq i$. This observation can be used to find a minimum in every column of $M$ in $O(m \log(m))$ steps by dividing the matrix in submatrices in every step and always computing a minimum in the middle column of a submatrix, as follows. We start by finding a minimum of the middle column $k$, which we will assume to be at row $i$. This will take $O(m)$ steps. The search for minima can now be naturally restricted to two submatrices; one consisting only of columns $1,\ldots,k-1$ and rows $1,\ldots, i$, and the second consisting only of columns $k+1,\ldots,m$ and rows $i,\ldots,m$. By Lemma~\ref{lemma} a minimum can always be found in these submatrices. Next, in both submatrices we find a minimum of their middle column. This time, calculating both minima simultaneously will take $m+1=O(m)$ steps. Again, each submatrix can be divided into two new submatrices using Lemma~\ref{lemma}, and we iterate. In each step, twice the number of minima compared to the previous step are calculated, and exhausting all $m$ columns will thus take $\left \lceil{\log_2(m)}\right \rceil $ iterations. Since in each step all minima can be simultaneously calculated in $O(m)$ time, the complexity of finding minima in all the columns of $M$ is $O(m\log(m))$. This is the complexity for finding $\alpha^*_1, \ldots, \alpha^*_m$.  The algorithm is summarized in Algorithm \ref{proc_M}, below.

\begin{algorithm}[!h]
\caption{Finding the jumps of $h(\alpha)$} \label{proc_M}
\vspace*{-12pt}
\begin{tabbing}
  \enspace Do $\mathrm{findjumps}(1, m, 1, m)$ where\\
  \enspace $\mathrm{findjumps}(\ubar{r}, \bar{r}, \ubar{c}, \bar{c})$:\\
  \qquad Set $j = \lfloor \frac12(\ubar{c} + \bar{c}) \rfloor$\\
  \qquad Find any $\max(\ubar{r}, j) \leq i \leq \bar{r}$, that minimizes $M_{ij} = \frac{p_i}{i-j+1}$\\
  \qquad Set $\alpha^*_{m-j+1} = s_{m-j+1} M_{ij}$\\
  \qquad If $j > \ubar{c}$ do $\mathrm{findjumps}(\ubar{r}, i, \ubar{c}, j-1)$\\
  \qquad If $j < \bar{c}$ do $\mathrm{findjumps}(i, \bar{r}, j+1, \bar{c})$\\
\end{tabbing}
\end{algorithm}

From $\alpha^*_1, \ldots, \alpha^*_m$ we can find $\alpha_1, \ldots, \alpha_m$ and therefore $h(\alpha)$ via (\ref{cummax}) in $O(m)$ time. Note that this final step may be omitted in the case of the Simes local test.

\section{Adjusted $p$-values}

From $h(\alpha)$ we can subsequently calculate the adjusted $p$-value $\tilde{p}_i$ for each hypothesis $H_i$ in Hommel's procedure by finding the minimum $\alpha$ for which $h(\alpha)p_i\leq\alpha$. We can, however, find all the adjusted $p$-values in linear time when we exploit the following lemma.

\begin{lemma} \label{lemma_adjusted}
The adjusted $p$-value of $H_i$ is given by
\[
\tilde{p}_i = \min (s_{t_i}p_i, \alpha_{t_i}),
\]
where
\begin{equation} \label{eq_adjp_ti}
t_i = \max \{1\leq j \leq m+1 \colon s_{j-1} p_i \leq \alpha_j\}.
\end{equation}
\end{lemma}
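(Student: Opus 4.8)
The plan is to reduce the statement to the definition of $\tilde p_i$ as a minimum and then verify that $a := \min(s_{t_i}p_i,\alpha_{t_i})$ is exactly that minimum by proving one bound on each side. First I would combine Lemma~\ref{elementary} with the definition $\tilde p_i = \min\{0\le\alpha\le 1 : i\in R_\alpha\}$ to rewrite
\[
\tilde p_i = \min\{\alpha\in[0,1] : s_{h(\alpha)}p_i \le \alpha\}.
\]
The two structural facts I would then record about the step function $h$ follow directly from the definition $\alpha_j=\min\{\alpha: h(\alpha)<j\}$ together with $h$ being weakly decreasing, integer-valued and right-continuous: namely $h(\alpha)\ge j \iff \alpha<\alpha_j$, or equivalently $h(\alpha)\le j-1 \iff \alpha\ge\alpha_j$.

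Next I would pin down the combinatorial meaning of $t_i$. Since $s_{j-1}$ is weakly increasing in $j$ and $\alpha_j=\max_{j\le l\le m}\alpha^*_l$ is weakly decreasing in $j$, the condition $s_{j-1}p_i\le\alpha_j$ is downward closed in $j$: if it holds for some $j$ it holds for every $j'\le j$, because $s_{j'-1}p_i\le s_{j-1}p_i\le\alpha_j\le\alpha_{j'}$. As $j=1$ always satisfies it ($s_0=0$ and $\alpha_1\ge 0$), the set $\{1\le j\le m+1: s_{j-1}p_i\le\alpha_j\}$ equals $\{1,\dots,t_i\}$. This yields the two facts I will use below: $s_{t_i-1}p_i\le\alpha_{t_i}$ always, and $s_{t_i}p_i>\alpha_{t_i+1}$ whenever $t_i\le m$.

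With $t:=t_i$ and $a:=\min(s_t p_i,\alpha_t)$, the heart of the argument is two inequalities. For the lower bound I would take any $\alpha<a$; then $\alpha<\alpha_t$ forces $h(\alpha)\ge t$, so $s_{h(\alpha)}p_i\ge s_t p_i>\alpha$, and $H_i$ is not rejected at $\alpha$. For the upper bound I would show $H_i$ is rejected at $\alpha=a$, splitting on which term attains the minimum. If $a=\alpha_t$ then $h(a)\le t-1$, so $s_{h(a)}p_i\le s_{t-1}p_i\le\alpha_t=a$. If $a=s_t p_i$ then $a=s_t p_i>\alpha_{t+1}$ gives $h(a)\le t$, so $s_{h(a)}p_i\le s_t p_i=a$. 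Either way $s_{h(a)}p_i\le a$, so $a$ lies in the set whose minimum is $\tilde p_i$ while nothing below $a$ does, giving $\tilde p_i=a$.

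The step I expect to require the most care is the boundary case $t_i=m+1$ and, more generally, the interaction of ties with the strict-versus-weak inequalities at the jumps of $h$. When $t_i=m+1$ the symbol $s_{m+1}$ is not defined, but the convention $\alpha_{m+1}=0$ forces $a=0$, the lower-bound claim is vacuous, and the upper-bound claim collapses to $s_{h(0)}p_i\le s_m p_i\le\alpha_{m+1}=0$; so this case should be dispatched separately. The other delicate point is justifying $h(\alpha)\ge j\iff\alpha<\alpha_j$ at the jump points themselves, which is precisely where right-continuity of $h$ is needed and where I would check that the inequalities above ($s_t p_i>\alpha_{t+1}$ strict, $s_{t-1}p_i\le\alpha_t$ weak) match the endpoint behaviour of $h$. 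Organising the proof around the two one-sided bounds, rather than trying to describe the shape of $\alpha\mapsto s_{h(\alpha)}p_i-\alpha$ directly, is what keeps the ties under control and avoids most of the casework.
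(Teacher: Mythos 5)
Your proof is correct and follows essentially the same route as the paper's: both reduce the claim via Lemma~\ref{elementary} to the condition $s_{h(\alpha)}p_i\leq\alpha$, use the equivalence $\alpha\geq\alpha_j\iff h(\alpha)<j$ together with the membership and maximality of $t_i$ in (\ref{eq_adjp_ti}), and establish the two one-sided bounds. The only cosmetic differences are that you verify rejection just at $\alpha=\min(s_{t_i}p_i,\alpha_{t_i})$ (splitting on which term attains the minimum) where the paper verifies it for all larger $\alpha$, and that you make explicit the downward-closedness of the defining condition and the $t_i=m+1$ boundary case, which the paper treats implicitly or dismisses as trivial.
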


\begin{proof}
Let $\alpha \geq \tilde p_i$. Then $\alpha \geq \alpha_{t_i}$ or $\alpha \geq s_{t_i}p_i$. If $\alpha \geq \alpha_{t_i}$, then $h(\alpha) < t_i$. By definition of $t_i$ we have $s_{h(\alpha)} p_i \leq s_{t_i-1}p_i \leq \alpha_{t_i} \leq \alpha$, which implies by Lemma~\ref{elementary} that $H_i$ is rejected by the closed testing procedure. If $\alpha \geq s_{t_i} p_i$, then by definition of $t_i$ if $t_i\leq m$ and trivially if $t_i=m+1$, we have $\alpha \geq s_{t_i}p_i \geq \alpha_{t_i+1}$. Therefore $h(\alpha) \leq t_i$, so that $s_{h(\alpha)}p_i \leq s_{t_i}p_i \leq \alpha$, so $H_i$ is rejected by the closed testing procedure by Lemma~\ref{elementary}.

Let $\alpha < \tilde p_i$. Then $\alpha < s_{t_i}p_i$ and $\alpha < \alpha_{t_i}$. The latter implies that $h(\alpha) \geq t_i$, so that $s_{h(\alpha)} p_i \geq s_{t_i}p_i > \alpha$. This implies by Lemma~\ref{elementary} that $H_i$ is not rejected by the closed testing procedure.
\end{proof}

We can use the following procedure to find $t_1,\ldots, t_m$:

\begin{algorithm}[!h]
\caption{Calculating all adjusted $p$-values} \label{proc_adjp}
\vspace*{-12pt}
\begin{tabbing}
   \enspace Initialize $i=1$ and $j=m+1$\\
   \enspace Repeat\\
   \qquad If $s_{j-1}p_i \leq \alpha_j$ set $t_i = j$ and increment $i$ by 1\\
   \qquad Otherwise, decrease $j$ by 1 \\
   \qquad If $i>m$, stop\\
\end{tabbing}
\end{algorithm}

To see that the procedure is valid, note that for each $i$ the procedure finds the largest value of $1\leq j \leq t_{i-1}$ such that $s_{j-1}p_i \leq \alpha_j$. Since $p_i \geq p_{i-1}$ we have $t_i \leq t_{i-1}$, so this is equivalent to (\ref{eq_adjp_ti}). It is obvious that Algorithm \ref{proc_adjp} takes linear time in $m$, and that consequently all adjusted $p$-values can be calculated in linear time if $h(\alpha)$ is known.

\section{Implementation}

The algorithms in this paper have been implemented in \texttt{R} in the package \texttt{hommel} which is available on CRAN. We compare computation time in practice between the new algorithm and the current implementation of Hommel's and Hochberg's algorithms in \texttt{R}'s function \texttt{p.adjust}.

We sampled $m$ $p$-values independently from a squared uniform distribution, varying $m$ from $10^3$ to $5\times 10^7$. Computation times were evaluated from the \texttt{p.adjust}-function (package \texttt{base} for both the Hommel and Hochberg methods, and for the new algorithm as implemented in the \texttt{hommel} package. The results are given in figure \ref{comptimes}. Computation times for Hommel's method with \texttt{p.adjust} were not calculated for $m$ above $5\times 10^5$ since at this value of $m$ the calculation already took more than 5 hours. Computation times less than 10 minutes have been averaged over several runs.

\begin{figure}[!ht]
\includegraphics[width=\textwidth]{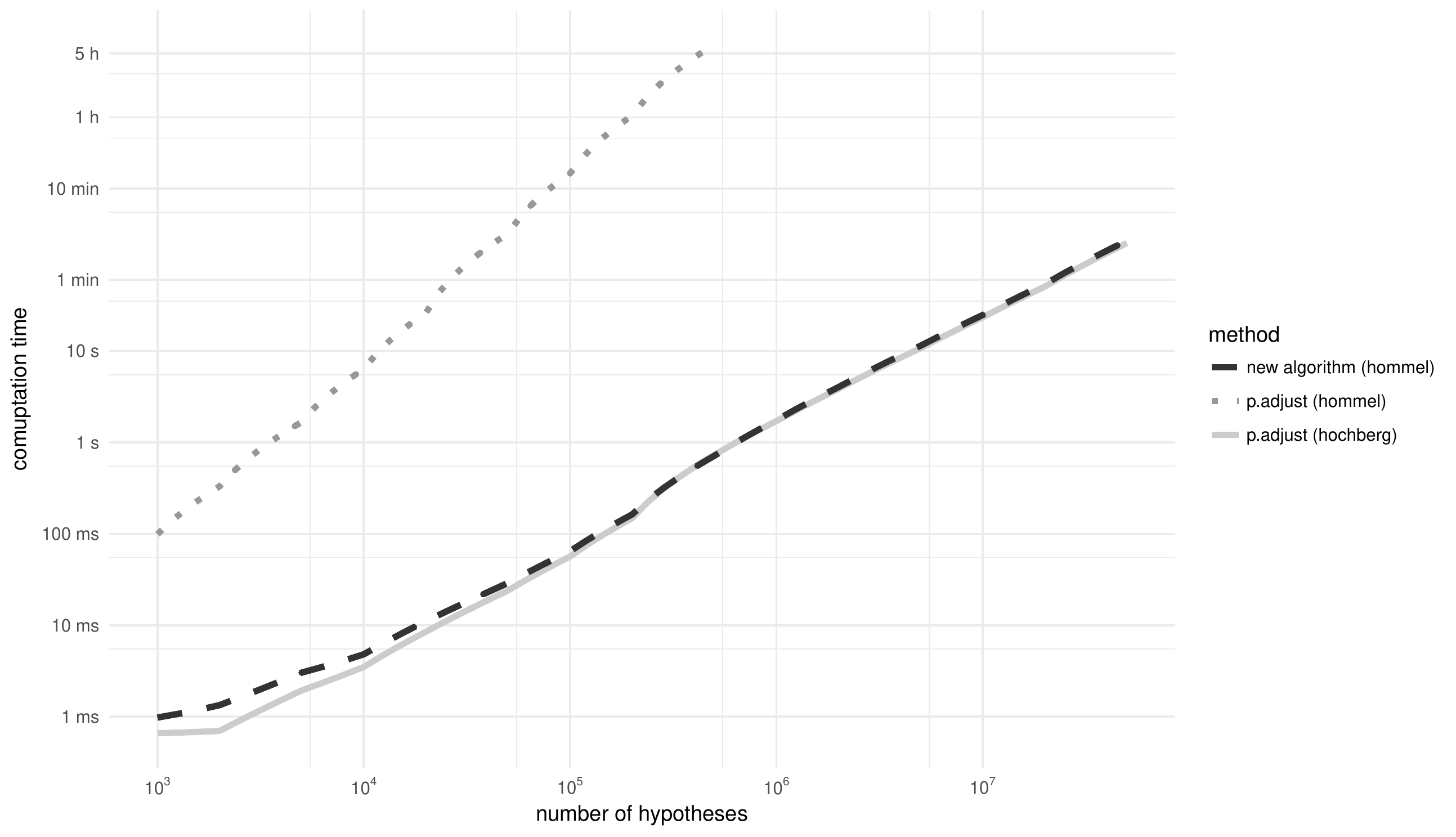}
\caption{Computation time for the new algorithm as implemented in the \texttt{hommel} package compared with Hommel's and Hochberg's methods as implemented in \texttt{p.adjust}.}
\label{comptimes}
\end{figure}

The computational gain of the new shortcut relative to Hommel's method is clear. We see that there is no apparent advantage for Hochberg's method over the new exact algorithm in terms of computation time.

\section{Discussion}

We have presented a new shortcut that calculates adjusted $p$-values of elementary hypotheses for closed testing with Simes local tests in linearithmic time. The new method combines the best of Hommel's and Hochberg's methods: it has the computational speed of Hochberg's method and the power of Hommel's method. The new shortcut should replace the two former methods in all applications.

The computational gain of the novel shortcut in this paper is important when many hypotheses are tested and FWER control is desired. Although FDR control is preferred in some application areas with many hypotheses (e.g.\ transcriptomics), FWER control remains the standard in other areas where more rigorous error control is desired, e.g.\ genome-wide association studies \citep{Sham2014} and neuroimaging \citep{Eklund2016}. In the end, the choice for FDR or FWER should not be made on the basis of the size of the multiple testing problem, but on the desired use and reliability of the discoveries. FWER control, unlike FDR control, for example, has the important property that error control remains guaranteed for arbitrary subsets of the discoveries \citep{Finner2001, Goeman2014}.

The combination of closed testing with the Simes local test can also be used for calculating confidence bounds for the false discovery proportion of subsets of the hypotheses \citep{Goeman2011}. The results of this paper may also be instrumental for obtaining such bounds more quickly.

\bibliographystyle{apalike}
\bibliography{hommel}

\end{document}